\documentclass{article}
\usepackage{graphicx, hyperref, amsmath, amsfonts, bbm, geometry, amsthm, amssymb} 

\usepackage[sorting=none, sortcites=true, backend=biber, style=numeric]{biblatex} 
\newtheorem{theorem}{Theorem}
\newtheorem{lemma}{Lemma}

\def\E{\mathbb{E}}
\def\V{\text{Var}}

\title{Beyond species area curves: a theoretical approach to the relationship between diversity and area}
\author{Hwai-Ray Tung, Simon Levin}
\author{Hwai-Ray Tung\thanks{Program in Applied and Computational Mathematics, Princeton University}\and
Simon A Levin\thanks{High Meadows Environmental Institute (HMEI), Princeton University, United States}\thanks{Department of Ecology and Evolutionary Biology, Princeton University}
}

\begin{document}

\maketitle

\begin{abstract}
    Species area curves, which describe the number of species present as a function of area, have long been used to understand biodiversity and inform conservation efforts. While understanding the number of species is important, it leaves out information about the population sizes of each species. In this work, we examine the relationship between the effective number of species from different diversity indices, like Simpson's index and the Shannon index, and area. These effective number of species are also referred to as Hill numbers. Using a spatial and neutral model that has previously been used to understand species area curves, we show through a combination of theory and simulations that the relationship between the effective number of species and area is linear when the area is sufficiently large and resembles a power law for smaller areas.
\end{abstract}

\section{Introduction} 
Biodiversity is a critical consideration for ecosystem services. For example, having high biodiversity can reduce susceptibility to invasive species \cite{hooper2005effects}, which play a major role in 60\% of recorded extinctions and is expected to cost over 423 billion USD in damages annually globally \cite{roy_2024_11629357}. The WHO has estimated the global economic costs of declining biodiversity at 10 trillion USD annually \cite{who_biodiversity_2025}.

A simple way to evaluate biodiversity in an area is to count the number of species present, also called species richness. The relationship between species richness and area is called a species area curve. Since the early 1800s \cite{watson1835remarks}, biologists have been recording species area curves across a range of geographic areas and multiple forms of flora and fauna, including birds, beetles, mammals, and grasses, among others; see \cite{drakare2006imprint} for a meta analysis of almost 800 species area curves. In spite of the wide range of datasets, the relationship between species richness and area is often well described by a power law, as proposed by Arrhenius in 1921 \cite{arrhenius1921species}; if $N$ is the number of species in area $A$,
$$
N = cA^z
$$
where $z$ in data often sits between $0.15$ and $0.4$. 

In search of explanations for the power law relationship, many turned to theoretical models. Preston \cite{preston1962canonical}, for example, argued that if the species present in each area is a random sample of species from the larger population and individuals are distributed across species according to a lognormal law, then a power law with power $z\approx 0.26$ is recovered. Hubbell \cite{hubbell1992speciation, hubbell2011unified} in 1992 was the first to use a spatial math model; using simulations, he evaluated his discrete time model with speciation, dispersal, extinction, and multiple individuals at each site. 

This work is strongly inspired by \cite{durrett1996spatial} and \cite{bramson1996spatial}, who use a mixture of theory and simulations to understand the species area curve in what probabilists refer to as the multitype voter model with mutation on the 2D square lattice. This model can be viewed as a simplified version of Hubbell's model. Each site contains exactly one individual, and individuals that have died are replaced by the species of a randomly chosen neighbor. In addition, an individual can be replaced by an individual from an entirely new species; this can be interpreted as a mutation or replacement by an immigrating species from far away. The original voter model, which had only two species and no mutations, was first studied by Clifford and Sudbury in 1973 \cite{clifford1973model} and Holley and Liggett in 1975 \cite{holley1975ergodic}, and variations on the voter model, like the biased voter model, have been used to model other systems in biology \cite{williams1972stochastic}. \cite{durrett1996spatial} uses results from \cite{bramson1996spatial} in conjunction with simulations to show that the multitype voter model with mutation yields a power law with a range of powers $z$; the smaller the mutation rate $\alpha$, the smaller the power.

As useful as species area curves are, they ignore species evenness and paint an incomplete picture of biodiversity; intuitively, the biodiversity of a region with two species, one with a hundred individuals and one with one, and a region also with two species, except both have fifty individuals should be different. While there is no canonical method to quantify diversity, people often use Simpson's index and the Shannon index, which can be used to generate corresponding effective number of species (ENS). These in turn are part of the wider family of indices known as Hill numbers. 

The goal of this work is to understand how the species area curve changes when using other measures of diversity that include species evenness. In Section \ref{sec:model}, we detail the multitype voter model with mutation and the different measures of diversity we use. In Section \ref{sec:results}, we show simulation results that suggest a relationship resembling a power law between the ENS for Simpson's index and the Shannon index for appropriately small areas. We then use theorems to estimate the relationship between the power $z$ and the mutation rate $\alpha$ for the power law and to show a linear relationship for sufficiently large areas. We conclude with a brief discussion and leave proofs to the Appendix.

\section{Model}
\label{sec:model}

\subsection{Multitype voter model with mutation}
\label{sec:voterModel}
We use a multitype voter model with mutation on a 2D square lattice. Mathematically, the process is that for each site:
\begin{itemize}
    \item At rate $1$, change species to the species of a neighboring spot.
    \item At rate $\alpha$ turn into a new species.
\end{itemize}
The first can be interpreted as an individual reproducing and replacing a neighbor, and the second can be interpreted as an individual being replaced by a brand new species, which one might think of as an immigrant from far away (eg a bird that just flew to an island) or a mutation in a bacterial population that has changed the species. We assume $\alpha \ll 1$, i.e. the introduction of new species is rare relative to reproduction. We study the process at its unique stationary state, which exists by a short argument in \cite{bramson1996spatial}. This also removes any dependence on an initial state. 

We seek the Simpson's and Shannon index for a $M\times M$ subset of the grid for varying $M$. As explained in Section \ref{sec:CRW}, it will be useful to parameterize $M = L^r$ where $L = 1/\sqrt{\alpha}$ is a natural length scale.

\subsection{Duality with coalescing random walks}
\label{sec:CRW}
One of the features that makes the voter model mathematically convenient to analyze is its duality with coalescing random walks (CRW). As the name implies, the CRW model starts with many random walkers that, when two walkers meet, fuse into one walker. As we illustrate in Figure \ref{fig:duality}, the idea is that running the voter model backwards in time yields a CRW model. This idea is helpful even with mutations. A more careful treatment can be found in \cite{Durrett2026}.

\begin{figure}[ht]
    \centering
    \includegraphics[width=0.5\linewidth]{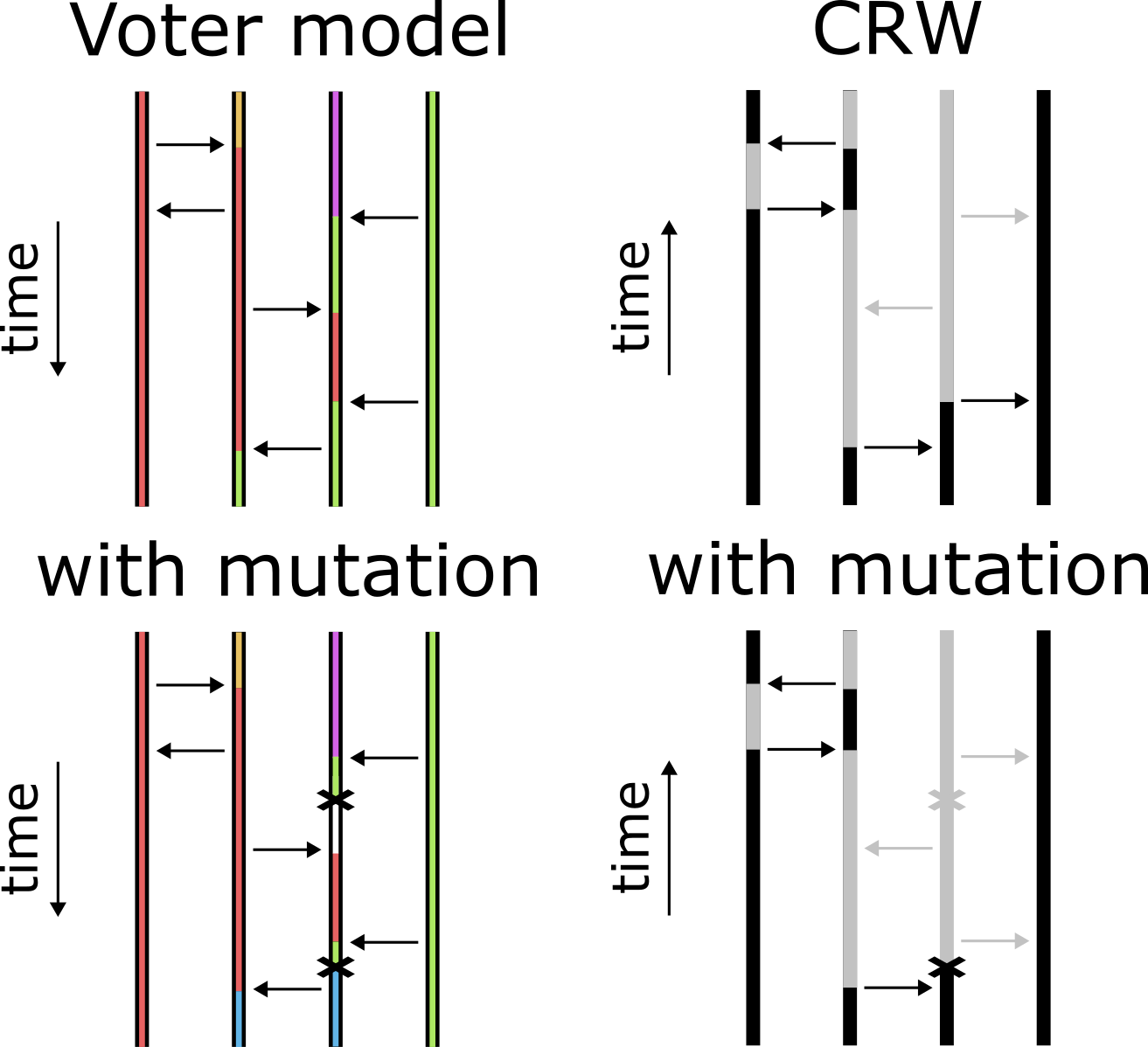}
    \caption{The top row depicts a graphical representation realization of a voter model and its corresponding CRW on four consecutive sites of the 1D line when there are no mutations. In the upper left, the voter model starts with four individuals, each of a different species (colored red, yellow, magenta, and green) at the top of the four lines, which represent the four sites through time. As time progresses, individuals replace their neighbors with their offspring, as denoted by the arrows. At the end of the process, the red species is present on the leftmost site, the green species occupies the remaining three sites, and the yellow and magenta species have gone extinct. For the corresponding CRW, we flip the direction of the arrows and start from the bottom of the lines. Tracing the walks back through time reveals the ancestral lineage of each site. The bottom row depicts the same except now with two mutations marked by crosses added in on the third site, introducing the white and blue species. Note that in the CRW diagram, two sites have the same species at the bottom if the random walks of their ancestral lineages meet before a mutation occurs.}
    \label{fig:duality}
\end{figure}

This duality can be used to explain why a species in the stationary state is likely to be contained in a $CL\times CL$ box for some constant $C$. In order for two sites to be the same species, their ancestral lineages must meet before a mutation occurs. As the ancestral lineages are random walks, they travel around distance $\sqrt{t}$ by time $t$. Since mutations occur at rate $\alpha$, mutations will occur at around time $1/\alpha = L^2$. Putting this together, the probability two sites are the same drops sharply when they are distance greater than $CL$ apart. This is also why we define $L=1/\sqrt{\alpha}$ as a natural length scale.

\subsection{Diversity and effective number of species}
\label{sec:}
Rather than focus on species richness, the number of species $N$, we seek to understand the Simpson's $S$ and Shannon $H$ indices over an $M\times M$ subset of the grid. They are defined as
$$
S = \sum_i p_i^2, \qquad H = -\sum_i p_i \ln(p_i)
$$
$S$ and $H$ are popular measures since $S$ can be interpreted as the probability two randomly chosen species are the same, and $H$ is a common measure of entropy. In the least diverse case (all sites belong to the same species), $S=1$ and $H = 0$. In the most diverse case (all sites belong to different species), $S = 1/M^2$ and $H =2 \ln(M)$. As such, $S$ ranges between $0$ and $1$ and decreases with diversity, while $H$ ranges between $1$ and $\infty$ and increases with diversity.

To compare $N, S,$ and $H$, we will also look at their effective number of species using Hill numbers. The $q$th Hill number is defined as
$$
{}^qD = \left(\sum_i p_i^{q}\right)^{1/(1-q)}
$$
and the effective number of species for $N, H,$ and $S$ are ${}^0D, {}^1D,$ and ${}^2D$, respectively. Although the equation is not defined for $q=1$, we take ${}^1D$ as the limit of ${}^qD$ as $q\rightarrow 1$. We can then confirm
\begin{equation}
    {}^0D = N,\qquad {}^1D = e^H,\qquad {}^2D = 1/S
    \label{eqn:ENoS}
\end{equation}

\section{Results}
\label{sec:results}
In this section, we give plots from simulations, then explain features of the plots using theory.

\subsection{Simulations}
\label{sec:sim}
We use the CRW model to sample the stationary distribution of the multitype voter model with mutations for $L \in \{100, 400, 1600\}$ and various intervals of $r$. For each sample, we examine the sites in a $\lfloor L^{r}\rfloor \times \lfloor L^{r}\rfloor$ box from the origin and compute diversity indices $N, S, $ and $H$. This lets us compute average diversity indices for each $L$ and $r$ pair. To generate the ENS, we plug the diversity indices into \eqref{eqn:ENoS}. Our plots use the axes $\ln(\text{ENS})/(2\ln(L))$ versus the effective $r$ value, which is $\log_L{\lfloor L^{r}\rfloor}$. This is a normalized log-log plot of ENS versus the area $L^{2r}$. The results are in Figure \ref{fig:sim_plots}. In the subsequent subsections, we discuss the ordering in size between the three ENS, linear behavior when $r>1$, and a power law like behavior on average when $r<1$.

\begin{figure}[h!]
    \centering
    \includegraphics{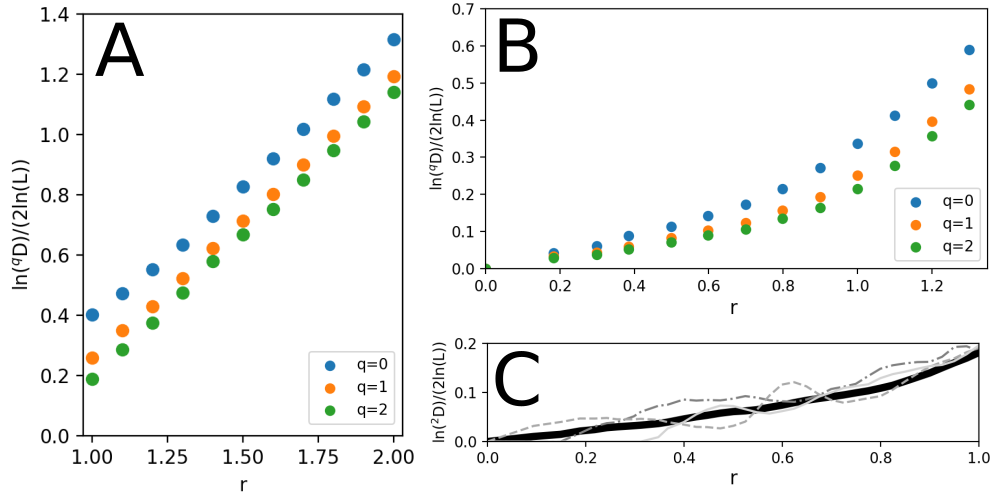}
    \caption{Panel A depicts a normalized log-log plot of Hill numbers ${}^0D, {}^1D,$ and ${}^2D$ versus area for a single sample where $L=100$ and $1\leq r \leq 2$. As noted in \eqref{eqn:ENoS}, ${}^0D, {}^1D,$ and ${}^2D$ correspond to the ENS of species richness, the Shannon index, and Simpson's index, respectively. Panel B depicts a similar normalized log-log plot, except it is the average of 20 samples when $L=400$ and $0\leq r \leq 1.3$. Panel C depicts the normalized log-log plot of the ENS of Simpson's index for three samples (thin grey lines) and for the average of 20 samples (thick black line) when $L=1600$ and $0\leq r \leq 1$. The simulations are run as described in Section \ref{sec:sim}.}
    \label{fig:sim_plots}
\end{figure}

\subsection{Ordering of the effective number of species}
In Figure \ref{fig:sim_plots}AB, the ENS is always largest for richness, followed by Shannon and lastly Simpson, ie ${}^0D \geq {}^1D \geq {}^2D$. Intuitively, it makes sense richness is the largest, as both Simpson and Shannon give less weight to species with smaller populations. Mathematically, Hill noted ${}^0D \geq {}^1D \geq {}^2D$ is equivalent to an arithmetic mean-geometric mean-harmonic mean (AM-GM-HM) inequality \cite{hill1973diversity}. More generally, it is known that the ordering holds and ${}^qD$ decreases with increasing $q$; if $0 \leq q_1 < q_2$ and we define $q_{i*} = q_i-1$, then
\begin{equation}
    {}^{q_1}D = \left(\sum_i p_i^{q_1}\right)^{1/(1-q_1)} = \frac{1}{\left(\sum_i p_i p_i^{q_{1*}}\right)^{1/q_{1*}}} \geq \frac{1}{\left(\sum_i p_i p_i^{q_{2*}}\right)^{1/q_{2*}}} = \left(\sum_i p_i^{q_2}\right)^{1/(1-q_2)} = {}^{q_2}D\
    \label{eqn:hillNumOrder}
\end{equation}
where the inequality in the middle follows from the weighted power mean inequality, and we have equality iff $p_i = 1/N$ for all $i$.

\subsection[Linear behavior for r>1]{Linear behavior for $r>1$}
As seen in Figure \ref{fig:sim_plots}A, the slopes of the curves for $r>1$ approach $1$. For richness, Theorem 1 in \cite{bramson1996spatial} shows that as $\alpha \rightarrow 0$,
\begin{equation}
    \frac{\ln(N)}{2\ln(L)} \rightarrow (r-1)^+=\max(r-1, 0)
    \label{eqn:rBigger1richness}
\end{equation}
in probability, explaining why a single sample exhibits the slope of $1$. We show analogous results for the ENS of Simpson and Shannon from the more general theorem for any of the Hill numbers,
\begin{theorem}
    For $q\geq 0$, as $\alpha \rightarrow 0$
    $$
    \frac{\ln({}^qD)}{2\ln(L)} \rightarrow (r-1)^+
    $$
    \label{thm:rBigger1}
\end{theorem}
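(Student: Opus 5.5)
Since ${}^qD$ is nonincreasing in $q$ by \eqref{eqn:hillNumOrder}, the plan is a sandwich. For every $q\geq 0$,
$$
{}^{\infty}D \le {}^2D\,\mathbbm{1}_{q\le 2} + {}^qD\,\mathbbm{1}_{q>2} \le {}^qD \le {}^0D = N ,
$$
where ${}^\infty D = 1/\max_i p_i$ is the limit as $q\to\infty$. It is actually simpler to use ${}^qD \geq 1/\max_i p_i$ directly. For $q>1$ this holds because $\sum_i p_i^q \le (\max_i p_i)^{q-1}$. For $q<1$ and $q=1$ it follows from monotonicity. The upper bound $\ln({}^qD)/(2\ln L) \le \ln N/(2\ln L) \to (r-1)^+$ in probability is exactly \eqref{eqn:rBigger1richness}, i.e.\ Theorem 1 of \cite{bramson1996spatial}. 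So it suffices to show that
$$
\frac{-\ln(\max_i p_i)}{2\ln L} \to (r-1)^+
$$
in probability. Equivalently, writing $n_{\max}$ for the largest number of sites in the $M\times M$ box ($M=L^r$) occupied by a single species, we must show that $\ln n_{\max}/(2\ln L)\to \min(r,1)$.

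The upper bound on $n_{\max}$ is trivial when $r\le1$, since $n_{\max}\le M^2 = L^{2r}$. For $r>1$ the target is $n_{\max}\le L^{2+\epsilon}$ with high probability. I plan to use the duality of Section \ref{sec:CRW}. Consider the lineage of any site $x$; it mutates after an $\text{Exp}(\alpha)$ time. Every site $y$ sharing $x$'s species must have its lineage coalesce with $x$'s lineage before that mutation. Write $K = L^2\ln^2 L$. With probability $1-o(L^{-c})$ the mutation time is at most $K$, and during that time the lineage stays within distance $L^{1+\epsilon/4}$ of $x$. On this event, all sites sharing $x$'s species lie in a ball of radius about $L^{1+\epsilon/4}$. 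Such a ball contains at most $L^{2+\epsilon/2}$ sites. A union bound over the $L^{2r}$ sites then gives the bound, provided the failure probabilities decay superpolynomially. They do: the mutation tail is $e^{-\ln^2 L}$, and the random walk displacement tail is Gaussian.

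There is one subtlety. Lineages of other sites could be far away, coalesce with $x$'s lineage, and still lie in the same species. This is why I localize the whole family: the common ancestor is reached before time $K$, so every member of the species descends from a point near $x$'s lineage within time $K$. Hence every member is within $2L^{1+\epsilon/4}$ of $x$.

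The lower bound on $n_{\max}$ is the main obstacle. I want $n_{\max}\ge L^{2\min(r,1)-\epsilon}$ with high probability. I would bound $\E\sum_i n_i^2$ from below, which counts pairs of sites of the same species. By duality, two sites at distance $d\le L^{1-\epsilon/4}$ coalesce before mutation with probability at least roughly $1 - \ln d/\ln L \gtrsim \epsilon/4$, using the two-dimensional hitting estimate $P(\text{meet before time } t)\approx \ln(t)/\ln(\ldots)$. This gives $\E\sum_i n_i^2 \gtrsim \epsilon\, M^2\min(M,L^{1-\epsilon/4})^2$. Since $\sum_i n_i^2\le n_{\max}M^2$, we obtain a lower bound on $\E\, n_{\max}$. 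Upgrading this to a statement in probability requires concentration. I would try a second moment or variance bound on the pair count. For $r>1$, the alternative is to partition the box into disjoint $L^{1-\epsilon}$ subboxes, where near-independence from localization makes at least one subbox succeed. If this proves messy, I would instead cite the corresponding cluster-size estimates in \cite{bramson1996spatial}.

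Combining the two bounds on $n_{\max}$ gives $-\ln\max_i p_i/(2\ln L) = (2r - \ln n_{\max})/(2\ln L)\to r-\min(r,1) = (r-1)^+$. The sandwich then finishes the proof.
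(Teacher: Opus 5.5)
Your proof works, and it is shorter than you think: the step you call the main obstacle is not needed. The upper bound on ${}^qD$ already comes from ${}^qD\le N$. So ${}^qD\ge 1/\max_i p_i = M^2/n_{\max}$ is only used as a lower bound, and for that you need only an \emph{upper} bound on $n_{\max}$:
\[
P\Bigl(\tfrac{\ln({}^qD)}{2\ln L} < (r-1)^+-\delta\Bigr)\le P\bigl(n_{\max} > M^2 L^{-2(r-1)^+ + 2\delta}\bigr).
\]
The right side is $0$ for $r\le 1$, since $n_{\max}\le M^2$. For $r>1$ it tends to $0$ by your localization.

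A lower bound on $n_{\max}$ would only bound ${}^\infty D$ from above, which the sandwich never uses, so delete that paragraph. As written, its concentration step is unproved and makes the argument look incomplete when it is not.

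In the localization, spell out the ``hence''. On the good event for $x$, a site $y$ with $|y-x|>2L^{1+\epsilon/4}$ can share $x$'s species only if its own lineage travels at least $|y-x|-L^{1+\epsilon/4}$ by time $K$. That lineage is marginally a rate-$1$ walk, and the sum of these tail probabilities over all such $y$ is superpolynomially small.

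The paper takes a different route. It reduces to integer $q>1$ and applies Markov's inequality to $\sum_i p_i^q$. It writes the mean as $\E[(|X\cap B|/L^{2r})^{q-1}]\le \E[|X|^{q-1}]/L^{2r(q-1)}$, where $X$ is the clan of a uniform site. It then uses Sawyer's theorem that the moments of $|X|/\E[|X|]$ converge to those of a unit exponential, together with $\E[|X|]\asymp L^2/\ln L$.

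You instead bound every Hill number at once by the Berger--Parker number $1/\max_i p_i$. You then control the largest species in the box by a union bound over its $L^{2r}$ sites, which is essentially a uniform-in-the-box version of the paper's Lemma~\ref{lem:inBox}.

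Your route has three advantages:
\begin{itemize}
\item It is self-contained: it uses only the exponential mutation clock and Gaussian random-walk tails, with no Sawyer or Sudbury asymptotics.
\item It covers all $q\in[0,\infty]$, including $q=\infty$, in one stroke.
\item It gives a high-probability statement about the maximum rather than a bound in expectation.
\end{itemize}
The paper's route is a two-line argument once the cited moment results are granted. It is also sharper in expectation: it controls $\E[\sum_i p_i^q]$ at the natural $L^2/\ln L$ scale rather than losing a factor $L^{\epsilon}$. And it uses the same machinery that drives Theorem~\ref{thm:Simpson}.
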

We leave the proof for the Appendix. The intuition for the linear behavior goes back to the duality with CRWs. Since random walks will travel a distance of around $L$ before mutation, sites that are greater than some distance $CL$ are very unlikely to affect each other and effectively independent. As such, we expect our ENS to scale linearly when $r>1$.

Note that this linear behavior on its own is likely not useful for explaining biological phenomena; it is incapable of generating the wide range of powers observed in nature for species richness, and it would not be surprising if a range was observed for other values of $q$. As such, we instead turn to the regime where $r<1$.

\subsection[Power law like behavior for r<1]{Power law like behavior for $r<1$}
\label{sec:rless1}
When $r<1$, species are highly likely to extend beyond the $L^r \times L^r$ box of interest, making theoretical work far more difficult than in the $r>1$ case. Instead, we use the same approach in \cite{durrett1996spatial} and note that the average ENS curve when $0\leq r \leq 1$ in Figure \ref{fig:sim_plots}BC are close to a straight line. This suggests that when $r<1$, the ENS resembles a power law relationship with area. Furthermore, the slopes of the lines change with $L$, yielding a range of powers.

To estimate the slope of the line, we need two points. One is immediate; when $r=0$, we are looking at only one site, and therefore the ENS is $1$. When considering species richness, \cite{durrett1996spatial} used Theorem 2 from \cite{bramson1996spatial} to estimate that when $r=1$, then $N \approx \frac{2}{\pi}(\ln(L))^2$. This yields a slope estimate of
$$
\frac{2\ln(\ln(L)) + \ln(\pi/2)}{2\ln(L)}
$$

We similarly estimate the slope by estimating the ENS when $r=1$. For Simpson,
\begin{theorem}
    Let $X$ denote the sites on the infinite lattice that have the same species as the origin and let $r>1$. Then as $L\rightarrow \infty$
    \begin{equation}
        L^{2r}\frac{\E[S]}{\E[|X|]} \rightarrow 1
        \label{eqn:ES}
    \end{equation}
    and
    \begin{equation}
        \frac{S}{\E[S]} \rightarrow 1
        \label{eqn:Sprob}
    \end{equation}
    in probability
    \label{thm:Simpson}
\end{theorem}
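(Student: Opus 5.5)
Write $B$ for the $M\times M$ box with $M=\lfloor L^r\rfloor$, and write $S=M^{-4}\sum_{x,y\in B}\mathbbm{1}\{x\sim y\}$, where $x\sim y$ means sites $x$ and $y$ carry the same species. This identity holds because $\sum_i p_i^2$ is the probability that two independently and uniformly chosen sites of the box agree. The whole argument works through this pair representation together with the CRW duality of Section \ref{sec:CRW}. Under duality, $P(x\sim y)=\phi(y-x)$, where $\phi(z)$ is the probability that two random walkers started at displacement $z$ meet before either lineage mutates. By translation invariance of the stationary state, $\E[|X|]=\sum_{z\in\mathbb{Z}^2}\phi(z)$.

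\textbf{Step 1: the expectation \eqref{eqn:ES}.} I would write
$$\E[S]=M^{-4}\sum_{z}\phi(z)\,\#\{x\in B: x+z\in B\}.$$
For $|z|\le M$ the count is $(M-|z_1|)(M-|z_2|)$. The goal is $\E[S]\,M^2/\sum_z\phi(z)\to1$, which is the claim since $M^2\approx L^{2r}$. It suffices to show that the mass of $\phi$ is concentrated on $|z|\le L^{1+\epsilon}$ with $1+\epsilon<r$. On that region the count is $M^2(1-O(L^{1+\epsilon-r}))$, and the contributions with $|z|>L^{1+\epsilon}$ are negligible relative to $\sum\phi$.

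For the tail, the difference walk runs at rate $2$ and mutations hit the pair at rate $2\alpha$. Hence $\phi(z)\le P(\text{walk from } z \text{ hits } 0 \text{ before an Exp}(2\alpha) \text{ time})$. Splitting on whether that time exceeds $L^{2+\epsilon}$ and applying a standard large-deviation bound for the walk gives $\sum_{|z|>L^{1+\epsilon}}\phi(z)\le e^{-cL^{\epsilon}}\cdot\mathrm{poly}(L)$. For the lower bound on $\sum\phi$, note that $\phi(z)\ge c/\ln L$ for $|z|\le L$, by the 2D hitting estimate $P(\text{hit } 0 \text{ before time } L^2)\approx \ln(L/|z|)/\ln L$. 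Therefore $\sum_z\phi(z)\gtrsim L^2/\ln L$, and the tail is indeed negligible.

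\textbf{Step 2: concentration \eqref{eqn:Sprob}.} I would use Chebyshev's inequality and show $\V(S)/\E[S]^2\to0$. The variance expands as
$$M^{-8}\sum_{x,y,u,v}\big[P(x\sim y,u\sim v)-\phi(y-x)\phi(v-u)\big].$$
Split the quadruples by the distance $d$ between the pairs $\{x,y\}$ and $\{u,v\}$.

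\emph{Far pairs.} When $d>L^{1+\epsilon}$ and each pair has separation at most $L^{1+\epsilon}$, duality gives that the four lineages, run up to time $L^{2+\epsilon}$, either never interact or do so with probability $e^{-cL^\epsilon}$. One can couple them with two independent two-walker systems by giving each group its own graphical representation until the groups meet. Mutation decouples them afterwards, except on that exponentially small event. The covariance for such quadruples is therefore exponentially small.

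\emph{Near pairs.} The remaining quadruples have all four points within $O(L^{1+\epsilon})$ of each other. The covariance is bounded by $P(x\sim y)\le\phi(y-x)$, so the contribution is at most $M^{-8}\cdot M^2 L^{2+2\epsilon}\sum_z\phi(z)\,M^2$. Divided by $\E[S]^2\approx M^{-4}(\sum\phi)^2$, this ratio is
$$L^{2+2\epsilon}/\textstyle\sum\phi\cdot M^{-2}\lesssim L^{2\epsilon}\ln L\cdot L^{-2r+2}\to0,$$
provided $\epsilon<r-1$.

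\textbf{Main obstacle.} I expect the delicate part to be the decoupling in Step 2. The four dual lineages coalesce and mutate jointly, so making "far groups are nearly independent" rigorous requires an explicit coupling with an exponential error bound. One also has to be careful that the event of a lineage wandering farther than $L^{1+\epsilon}$ before its mutation is controlled uniformly over all $M^4$ quadruples. Because the error is $e^{-cL^\epsilon}$ while the number of quadruples is only polynomial in $L$, a union bound should suffice.
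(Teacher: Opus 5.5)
The gap is in the near-pairs estimate of Step 2, which does not close as written. You use $|\mathrm{Cov}|\le\phi(y-x)$ together with your count: $M^2$ choices of $x$, weight $\sum_z\phi(z)$ for $y$, $L^{2+2\epsilon}$ choices of $u$, and $M^2$ choices of $v$. The contribution is then $M^{-4}L^{2+2\epsilon}\sum_z\phi(z)$. Dividing by $\E[S]^2\asymp M^{-4}(\sum_z\phi(z))^2$ gives $L^{2+2\epsilon}/\sum_z\phi(z)\asymp L^{2\epsilon}\ln L$, which diverges. The factor $M^{-2}$ in your ratio is an arithmetic slip. Letting $v$ range over all of $B$ while using only the bound $\phi(y-x)$ is exactly what kills the estimate.

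Relatedly, your dichotomy is incomplete. The complement of ``$d>L^{1+\epsilon}$ and both separations at most $L^{1+\epsilon}$'' is not ``all four points within $O(L^{1+\epsilon})$ of each other.'' It also contains every quadruple with $|y-x|>L^{1+\epsilon}$ or $|v-u|>L^{1+\epsilon}$, and your far/near split never treats these.

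The repair is a three-way split.
(i) If some pair has separation $>L^{1+\epsilon}$, bound the covariance by $\phi$ of that pair. The total is at most $2M^{-2}\sum_{|z|>L^{1+\epsilon}}\phi(z)$, which relative to $\E[S]^2$ is $\mathrm{poly}(L)\,e^{-cL^{\epsilon}}\to 0$ by your Step 1 tail bound.
(ii) If both separations are at most $L^{1+\epsilon}$ and $d>L^{1+\epsilon}$, your coupling applies and is fine.
(iii) Otherwise $u$ and $v$ both lie within $3L^{1+\epsilon}$ of $x$, so only $O(L^{4+4\epsilon})$ pairs $(u,v)$ occur. The contribution is $O(M^{-6}L^{4+4\epsilon}\sum_z\phi(z))$, and the ratio is $O(L^{4+4\epsilon}/(M^2\sum_z\phi(z)))=O(L^{2+4\epsilon-2r}\ln L)$. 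So you need $\epsilon<(r-1)/2$ rather than $\epsilon<r-1$.

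With this fix your argument goes through, and it is a genuinely different route from the paper's. The paper bounds $\E[S^2]/\E[S]^2$ by $\E[|X_i||X_j|]/\E[|X|]^2$, relies on H\"older plus Sawyer's fourth-moment asymptotics, and factorizes the far-apart term in one line. Your covariance expansion needs only $\sum_z\phi(z)\gtrsim L^2/\ln L$ and the tail bound, and your coupling gives an explicit error for the decoupling.

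One small point in Step 1 concerns summing the part where the exponential clock exceeds $L^{2+\epsilon}$ over infinitely many $z$. There you should keep the hitting event and use $\sum_z P(\text{walk from } z \text{ hits } 0)=\E[\text{range}]$, as in the paper's Lemma. Bounding each $\phi(z)$ by the probability that the clock exceeds $L^{2+\epsilon}$ is not summable.
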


By \cite{sudbury1976size, sawyer1977rates}, $\E[|X|] \approx \pi L^2/(2\ln(L))$, so based on Theorem \ref{thm:Simpson} we might guess that when $r=1$,
$$
S \approx \frac{\pi}{2\ln(L)}
$$
Using \eqref{eqn:ENoS} and taking log-log plots of both points at $r=0$ and $r=1$ yields the slope estimate
$$
\frac{\ln(\ln(L)) + \ln(2/\pi)}{2\ln(L)}
$$

The results of Theorem \ref{thm:Simpson} come from observing that $\E[S]$ is the expected fraction of sites that share the same species as a uniform randomly chosen site. One can similarly interpret the second moment of $S$. Results from \cite{sawyer1979limit} regarding moments of species sizes then yield the desired results. For details, see the Appendix.

Based on observing that $\E[H]$ is the expected log of the fraction of sites that share the same species as a uniform randomly chosen site and that \cite{sawyer1979limit} showed $|X|/\E[|X|]$ converges in distribution to a unit exponential, one might assume that
$$
H +\ln(\E[|X|])-2r\ln(L) \rightarrow \gamma
$$
where $\gamma \approx 0.5772$ is the Euler–Mascheroni constant. We conjecture this is true. If so, we can similarly estimate a slope of
$$
\frac{\ln(\ln(L)) + \ln(2/\pi)+\gamma}{2\ln(L)}
$$

\section{Discussion}
In this work, we examine how the ENS for various diversity indices relate to area in the multitype voter model with mutation on a 2D square lattice. In the regime where $r>1$, we prove that any of the Hill numbers grow linearly with respect to area. In the regime where $r<1$, simulations suggest a relationship similar to a power law between the various ENS and area. For Simpson's index, we prove a convergence in probability result that enables estimates of the power in the power law relationship.

Our model is simple and ignores several factors that almost certainly contribute to the relationship between diversity and area. For example, geographic differences create regions preferable to one species compared to others or regions that are more remote and have decreased immigration. Even without geographical differences, assuming individuals from different species have identical fitness and interact only through competition for space is likely inappropriate in some scenarios. While it may be tempting to incorporate these details, doing so would make the model specific to the implementation of the details and remove all chances of analytic tractability.

The analysis done here could be repeated for other Hill numbers and diversity indices. For any specific Hill number ${}^qD$, we echo Hill in advising that ``Indices non sunt multiplicandi praeter necessitatem... the use of diversity number of `peculiar' orders... is to be strongly discouraged'' \cite{hill1973diversity}. Given that a Hill number is a summary statistic, choosing values of $q$ that make ${}^qD$ interpretable is important. That said, it may be valuable to understand the curve relating ${}^qD$ to $q$ and how this curve changes with area.

One might wonder whether a power law is the correct relationship when $r<1$. After all, there appears to be some curvature around $r=1$, individual samples may deviate away from the average, and there are many functions which look linear for an interval on a log-log plot. Nevertheless, the power law is a natural starting point of investigation. In future work, we will test our predicted relationship between diversity and area with data. The fact that the multitype voter model with mutations correctly reproduces the power law relationship for species area curves is heartening and suggests the model may be useful for understanding other features of biodiversity.

\section*{Acknowledgments}
The work in this paper was supported by a gift from William H. Miller III.

\appendix

\section{Proof of Theorem \ref{thm:rBigger1}}
By \eqref{eqn:hillNumOrder} and \eqref{eqn:rBigger1richness}, for any $q$ we have 
$$
\frac{\ln({}^qD)}{2\ln(L)} \leq \frac{\ln({}^0D)}{2\ln(L)} = \frac{\ln(N)}{2\ln(L)}\rightarrow (r-1)^+
$$
as $\alpha \rightarrow 0$. As such, it suffices to show the theorem holds for any integer $q>1$. As ${}^qD \geq 1$, the theorem holds for $r\leq 1$ and it suffices to find an appropriate lower bound for ${}^qD$ for $r>1$. We therefore aim to show that for $r>1$ and for any $0<\epsilon< r-1$,
$$
\lim_{L\rightarrow \infty}P\left(r-1-\frac{\ln({}^qD)}{2 \ln(L)} > \epsilon\right) = 0
$$

Letting $q_* = q-1>0$ and using the definition of Hill numbers, we find
\begin{align*}
    r-1-\frac{\ln({}^qD)}{2 \ln(L)} > \epsilon \quad &\Leftrightarrow \quad \frac{\ln(1/{}^qD)}{2 \ln(L)} > \epsilon -r+1 \\
    &\Leftrightarrow \quad 1/{}^qD > L^{2(\epsilon -r+1)} \\
    &\Leftrightarrow \quad \left(\sum_i p_i^{q}\right)^{1/(q-1)} > L^{2(\epsilon -r+1)} \\
    &\Leftrightarrow \quad \sum_i p_i p_i^{q_*} > L^{2(\epsilon -r+1)q_*}
\end{align*}
It then follows by the Markov inequality that
$$
P\left(r-1-\frac{\ln({}^qD)}{2 \ln(L)} > \epsilon\right) = P\left(\sum_i p_i p_i^{q_*} > L^{2(\epsilon -r+1)q_*}\right) \leq \frac{\E[\sum_i p_i p_i^{q_*}]}{L^{2(\epsilon -r+1)q_*}}
$$

Note that we can interpret $\sum_i p_i p_i^{q_*}$ as the probability that for a uniform randomly chosen site $x$ in our $L^r\times L^r$ box of interest $B$, the next $q_*$ uniform randomly chosen sites in the box of interest belong to the same species as $x$. Letting $X$ denote the sites on $\mathbbm{Z}^2$ that share the same species as $x$, it then follows that
$$
\E[\sum_i p_i p_i^{q_*}] = \E\left[\left(\frac{|X\cap B|}{L^{2r}}\right)^{q_*}\right] \leq \frac{\E[|X|^{q_*}]}{L^{2rq_*}}
$$
Plugging back yields
$$
P\left(r-1-\frac{\ln({}^qD)}{2 \ln(L)} > \epsilon\right) \leq \frac{\E[|X|^{q_*}]}{L^{2(\epsilon+1)q_*}} = \E\left[\left(\frac{|X|}{\E[|X|]}\right)^{q_*}\right]\frac{\E[|X|]^{q_*}}{L^{2(\epsilon+1)q_*}}
$$
As $L\rightarrow \infty$, Saywer \cite{sawyer1979limit} showed the moments of $|X|/\E[|X|]$ converge to the moments of a unit exponential random variable, and \cite{sudbury1976size, sawyer1977rates} show that $\E[|X|]$ is on the order of $L^2/\ln(L)$. Therefore, 
$$
\lim_{L\rightarrow \infty}P\left(r-1-\frac{\ln({}^qD)}{2 \ln(L)} > \epsilon\right) =0
$$
which proves the desired result.

\section{Proof of Theorem \ref{thm:Simpson}}
We start by showing a Lemma about how widely dispersed a species can be, then show \eqref{eqn:ES} and lastly \eqref{eqn:Sprob}.

Let $x$ denote a uniformly randomly chosen site in our $M\times M$ box of interest, which we refer to as $B$. Let $X$ denote the sites on the infinite lattice that are the same species as $x$. Then
\begin{lemma}
    Let $B_x$ denote the sites within $L^{1+\epsilon}$ steps from $x$ for $0<\epsilon<1$. Then
    \begin{equation}
        \lim_{L\rightarrow \infty}P(X \subseteq B_x) = 1
        \label{eqn:inRadiusBox}
    \end{equation}

    This implies that when $r>1$,
    \begin{equation}
        \lim_{L\rightarrow \infty}P(X \subseteq B) = 1
        \label{eqn:inBox}
    \end{equation}
    \label{lem:inBox}
\end{lemma}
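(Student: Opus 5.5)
We use the coalescing random walk duality of Section \ref{sec:CRW}, read backwards from the chosen site $x$. The cluster $X$ is the set of sites $y$ whose ancestral lineage coalesces with the lineage of $x$ before either lineage meets a mutation. Equivalently, $X$ is the set of sites whose lineages trace back to the same mutation event as $x$'s lineage.

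Step 1: bound the time and distance. Let $T$ be the time, going backwards, at which $x$'s lineage first mutates. Then $T$ is exponential with rate $\alpha = L^{-2}$, so
$$P(T > L^{2+\epsilon/2}) = \exp(-L^{\epsilon/2}) \to 0.$$
Also, every site $y \in X$ has a lineage that merges with $x$'s lineage before time $T$. So the entire cluster descends from a single mutant born at time $-T$ at the position $z$ of $x$'s ancestor. In forward time, $X$ is the set of descendants of that mutant at time $0$ that have not been overwritten by later mutations.

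Step 2: reduce to a growth bound on the descendant set. Work in the forward voter dynamics (graphical representation). The set of sites occupied by descendants of a single individual over a time interval of length $t$ is contained in the set of sites reachable from its birth site by a path of arrows in the graphical representation. That reachable set grows at most linearly in time. By a standard Poisson path-counting (first passage) bound, for a constant $c$,
$$P(\text{some arrow path of duration } t \text{ travels distance} > ct) \le e^{-t}.$$
Linear growth only yields distance of order $L^{2+\epsilon/2}$, which is too weak. We instead need diffusive growth, and this is the main obstacle.

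Step 3: get diffusive control from duality. We bound the probability that $X$ contains a site $y$ with $|y - x| > L^{1+\epsilon}$ on the event $T \le L^{2+\epsilon/2}$. For such a $y$, the lineages of $x$ and $y$ must coalesce by time $L^{2+\epsilon/2}$. In particular, one of the two walks must travel distance at least $L^{1+\epsilon}/2$ within that time. A continuous-time simple random walk satisfies the Gaussian tail bound
$$P\Big(\sup_{s\le t} |W_s| > R\Big) \le C\exp(-cR^2/t) \quad \text{when } R \le t.$$
Taking $R = L^{1+\epsilon}/2$ and $t = L^{2+\epsilon/2}$ gives a bound of $\exp(-cL^{3\epsilon/2})$. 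We must union-bound over all candidate $y$, and there are infinitely many. We handle this in two ranges:
\begin{itemize}
\item For $|y - x| \le L^{3}$, there are polynomially many sites, which the stretched-exponential bound beats.
\item For $|y - x| > L^3$, the walk from $y$ would need to travel distance at least $|y-x|/2$ in time $L^{2+\epsilon/2}$. This has probability at most $\exp(-c|y-x|)$, using the linear-regime tail of the walk. This is summable over $y$ and tends to zero.
\end{itemize}
We also use the duality for the whole family of walks simultaneously. The lineages form a single coalescing system, and coalescence with $x$ only requires each pair to meet, so per-pair bounds on the displacement of the walk started at $y$ suffice. Combining with Step 1 gives \eqref{eqn:inRadiusBox}.

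Step 4: deduce \eqref{eqn:inBox}. Fix $r > 1$ and choose $\epsilon < r - 1$. A uniformly chosen $x \in B$ lies at distance more than $L^{1+\epsilon}$ from the complement of $B$, except with probability $O(L^{1+\epsilon}/L^{r}) \to 0$. On that event, $B_x \subseteq B$, so \eqref{eqn:inRadiusBox} implies \eqref{eqn:inBox}.

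The main point to verify carefully is Step 3: the diffusive tail bound must hold uniformly, which requires the time cutoff $L^{2+\epsilon/2}$ to stay below the squared distance $L^{2+2\epsilon}$.
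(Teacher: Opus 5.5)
Your proof is correct, but it takes a different route from the paper's. The paper starts from Markov's inequality, $P(X\nsubseteq B_x)\le \E[|X\backslash B_x|]$, and then uses Sudbury's observation: $\sum_{y\notin B_x}P(y\in X)$ equals the expected number of distinct sites outside $B_x$ visited by a single rate-$1$ walk killed at rate $\alpha$ (the difference of the two lineages, after rescaling time). The remaining infinite sum is then handled by Cauchy--Schwarz. A crude $O(L^4)$ second moment (the range is at most the geometric number of steps) is paired with an Azuma--Hoeffding exit probability, split at $L^{2+\epsilon}$ steps.

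You do something different. You truncate at the mutation time $T$ of $x$'s own lineage. On $\{T\le L^{2+\epsilon/2}\}$ you union-bound directly over $y\notin B_x$ the probability that the lineages of $x$ and $y$ meet by that time, using only single-walk displacement tails. You control the infinite sum by splitting it into a polynomially large near region (Gaussian tail) and a far region (Poisson tail of the jump count, summable in $y$).

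Your sum over $y$ is essentially $\E[|X\backslash B_x|]$ in disguise, so both arguments are first-moment bounds. The difference lies in what each buys:
\begin{itemize}
\item The paper's range identity reduces everything to one killed walk, so it never needs the super-diffusive tail regime: far sites are absorbed by the bound on the number of steps.
\item Your route is more elementary. It avoids the range identity and the second-moment step, at the cost of needing two separate tail estimates.
\end{itemize}

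A few small points:
\begin{itemize}
\item For $y$ with $t<|y-x|/2\le L^3/2$, your Gaussian bound (stated for $R\le t$) should be applied through monotonicity of $P(\sup_{s\le t}|W_s|>R)$ in $R$.
\item Step 2, and the forward-time description at the end of Step 1, are never used and can be dropped.
\item In Step 4 you implicitly use $P(X\nsubseteq B)\le P(B_x\nsubseteq B)+P(X\nsubseteq B_x)$. This needs no independence, whereas the paper appeals to independence of the two events, which requires a translation-invariance remark.
\end{itemize}
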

\begin{proof}
    We start with \eqref{eqn:inRadiusBox}. Note by Markov inequality,
    $$
    P(X\nsubseteq B_x) = P(|X\backslash B_x| \geq 1) \leq \E[|X\backslash B_x|]
    $$
    We now use the argument given in \cite{sudbury1976size} to relate $\E[|X\backslash B_x|]$ to the range of a random walk, which is the number of distinct sites a random walk reaches before some time $t$. $\E[|X\backslash B_x|]$ is the sum over sites $x^*$ outside of $B_x$ of the probability rate $1$ random walks from $x$ and $x^*$ meet before one of them mutates, each at rate $\alpha$. Subtracting the random walks from each other and rescaling time, we find this is the same as summing over the probability the rate $1$ random walk from $x$ reaches $x^*$ before a mutation at rate $\alpha$. This is the same as the expected number of unique sites $Y$ outside of $B_x$ that a rate 1 random walk starting $x$ reaches before mutation at rate $\alpha$. Then
    $$
    \E[|X\backslash B_x|] = \E[|Y\backslash B_x|] = \E[|Y\backslash B_x|\mathbbm{1}_{Y \nsubseteq B_x}] \leq \sqrt{\E[|Y\backslash B_x|^2]P(Y \nsubseteq B_x)}
    $$
    by Holder's inequality.

    A cheap upper bound for $\E[|Y\backslash B_x|^2]$ is that $|Y\backslash B_x|$ must be less than one plus the number of steps taken by the random walk before mutation. Since the number of steps is geometric with success probability $\alpha/(1+\alpha)$, we use the second moment of the geometric to conclude that $\E[|Y\backslash B_x|^2]$ grows slower than $L^4$.

    To upper bound $P(Y \nsubseteq B_x)$, we use the old trick of thinking of a random walk on $\mathbbm{Z}^2$ as the sum of two 1D random walks, one with steps $\pm(1/2, 1/2)$ and the other with steps $\pm(-1/2, 1/2)$. $Y \nsubseteq B_x$ if one of the 1D walks travels a distance greater than $L^{1+\epsilon}$, which by Azuma-Hoeffding is less than $2\exp(-L^{2+2\epsilon}/(2k))$ where $k$ is the number of steps taken. It then follows that
    \begin{align*}
        P(Y \nsubseteq B_x) &\leq \sum_{k=1}^\infty \frac{\alpha}{1+\alpha} \left(\frac{1}{1+\alpha}\right)^k 4e^{-\frac{L^{2+2\epsilon}}{2k}}\\
        &\leq \sum_{k=1}^{L^{2+\epsilon}} \frac{\alpha}{1+\alpha} \left(\frac{1}{1+\alpha}\right)^k 4e^{-\frac{L^{2+2\epsilon}}{2k}} + \sum_{k=L^{2+\epsilon}}^\infty \frac{\alpha}{1+\alpha} \left(\frac{1}{1+\alpha}\right)^k 4e^{-\frac{L^{2+2\epsilon}}{2k}}
    \end{align*}
    The first term we upper bound by ignoring the $1/(1+\alpha)^k$ term and then bounding each term with setting $k=L^{2+\epsilon}$. This yields
    $$
    \sum_{k=1}^{L^{2+\epsilon}} \frac{\alpha}{1+\alpha} \left(\frac{1}{1+\alpha}\right)^k 4e^{-\frac{L^{2+2\epsilon}}{2k}} \leq \sum_{k=1}^{L^{2+\epsilon}} \alpha  4e^{-\frac{L^{2+2\epsilon}}{2k}}\leq L^{2+\epsilon} \alpha  4e^{-\frac{L^{\epsilon}}{2}} \leq 4 L^{\epsilon}e^{-\frac{L^{\epsilon}}{2}}
    $$
    The second term we upper bound by ignoring the exponential and using the survival function of a geometric distribution to find
    $$
    \sum_{k=L^{2+\epsilon}}^\infty \frac{\alpha}{1+\alpha} \left(\frac{1}{1+\alpha}\right)^k 4e^{-\frac{L^{2+2\epsilon}}{2k}} \leq 4\sum_{k=L^{2+\epsilon}}^\infty \frac{\alpha}{1+\alpha} \left(\frac{1}{1+\alpha}\right)^k \leq  4\left(\frac{1}{1+\alpha}\right)^{L^{2+\epsilon}} =  4\left(\frac{L^2}{1+L^2}\right)^{L^{2+\epsilon}}
    $$
    which shrinks with order $e^{-L^{\epsilon}}$ as $L\rightarrow \infty$. As such, $\sqrt{\E[|Y\backslash B_x|^2]P(Y \nsubseteq B_x)} \rightarrow 0$, which implies 
    \begin{equation}
        \E[|X\backslash B_x|] \rightarrow 0
        \label{eqn:expectedOutsideBox0}
    \end{equation}
    and therefore $P(X\nsubseteq B_x) \rightarrow 0$ and $P(X\subseteq B_x)\rightarrow 1$, showing \eqref{eqn:inRadiusBox}.

    We now show \eqref{eqn:inBox} from \eqref{eqn:inRadiusBox}. Choose $\epsilon$ such that $1+\epsilon < r$. Then for a uniform randomly chosen $x$ in $B$, $P(B_x \subset B) \rightarrow 1$. As $X \subseteq B_x$ and $B_x \subset B$ are independent, it follows that $P(X \subseteq B)\geq P(X \subseteq B_x)P(B_x \subset B) \rightarrow 1$.
\end{proof}

We will now prove \eqref{eqn:ES}. Note that $\E[S]$ can be interpreted as the probability two uniformly randomly selected sites in $B$ are the same species. Therefore
$$
\E[S] = \frac{\E[|X\cap B|]}{L^{2r}}
$$
As such,
$$
L^{2r} \frac{\E[S]}{\E[|X|]} = \frac{\E[|X\cap B|]}{\E[|X|]} = 1 - \frac{\E[|X\backslash B|]}{\E[|X|]}
$$
Since $0\leq |X\backslash B|/\E[|X|] \leq |X|/\E[|X|]$ and $|X\backslash B|\rightarrow 0$ in probability by \eqref{eqn:inBox}, the dominated convergence theorem implies $\E[|X\backslash B|]/\E[|X|] \rightarrow 0$ as $L\rightarrow \infty$, yielding the desired result.

Lastly, we prove \eqref{eqn:Sprob} by using Chebychev's inequality, which states
$$
P\left(\left|\frac{S}{\E[S]}-1\right| > \epsilon\right) \leq \frac{\V[S]}{\epsilon^2 \E[S]^2}.
$$
As such, we seek to upper bound $\E[S^2]$ to upper bound $\V[S]$. We note
$$
S^2 = \sum_{i, j} p_i^2p_j^2 \Rightarrow \frac{\E[S^2]}{\E[S]^2} = \frac{\E[|X_i \cap B||X_j \cap B|]}{L^{4r} \E[S]^2} 
$$
where $X_i, X_j$ are the sites that share the same species as the uniform randomly chosen sites $x_i, x_j$ respectively. It follows from \eqref{eqn:ES} that
$$
\lim_{L\rightarrow \infty} \frac{\E[S^2]}{\E[S]^2} = \lim_{L\rightarrow \infty} \frac{\E[|X_i \cap B||X_j \cap B|]}{\E[|X|]^2} \leq \lim_{L\rightarrow \infty} \frac{\E[|X_i||X_j|]}{\E[|X|]^2}
$$
Intuitively, this limit is $1$ since $x_i$ and $x_j$ are with high probability very far apart, making $X_i$ and $X_j$ roughly independent. More rigorously, we upper bound $\E[|X_i||X_j|]$ by examining the case where $x_i, x_j$ are close to each other and when they are farther apart. Let $B_i$ and $B_j$ denote the sites that are within $L^{1+\epsilon}$ steps of $x_i$ and $x_j$, respectively. Then
\begin{align*}
    \E[|X_i||X_j|] &= \E[|X_i||X_j| \mathbbm{1}_{(X_i \subset B_i)\cap (X_j\subset B_j) \cap (B_i \cap B_j = \varnothing)}]\\
    &\phantom{=}+\E[|X_i||X_j| \mathbbm{1}_{(X_i \nsubseteq B_i)\cup (X_j\nsubseteq B_j) \cup (B_i \cap B_j \neq \varnothing)}]
\end{align*}

For the first term, we note
\begin{align*}
    \E[|X_i||X_j| &\mathbbm{1}_{(X_i \subset B_i)\cap (X_j\subset B_j) \cap (B_i \cap B_j = \varnothing)}]\\
    &\leq \E[|X_i||X_j| \mathbbm{1}_{(X_i \subset B_i)\cap (X_j\subset B_j)}: B_i \cap B_j = \varnothing]\\
    &= \sum_{x_i' \in B_i} \sum_{x_j' \in B_j} P(x_i' \in X_i, X_i \subset B_i, x_j' \in X_j, X_j \subset B_j : B_i \cap B_j = \varnothing) \\
    &\leq \sum_{x_i' \in B_i} \sum_{x_j' \in B_j} P(x_i' \in X_i)P(x_j' \in X_j)\\
    &= \E[|X_i|]\E[|X_j|] = \E[|X|]^2
\end{align*}

For the second term, note that by repeated application of Holder's inequality,
\begin{align*}
    \E[|X_i||X_j| &\mathbbm{1}_{(X_i \nsubseteq B_i)\cup (X_j\nsubseteq B_j) \cup (B_i \cap B_j \neq \varnothing)}]\\
    &\leq \sqrt{\E[|X_i|^2|X_j|^2]P((X_i \nsubseteq B_i)\cup (X_j\nsubseteq B_j) \cup (B_i \cap B_j \neq \varnothing))}\\
    &\leq (\E[|X_i|^4]\E[|X_j|^4])^{1/4}\sqrt{P((X_i \nsubseteq B_i)\cup (X_j\nsubseteq B_j) \cup (B_i \cap B_j \neq \varnothing))}\\
    &\leq \sqrt{\E[|X|^4]}\sqrt{P((X_i \nsubseteq B_i)\cup (X_j\nsubseteq B_j) \cup (B_i \cap B_j \neq \varnothing))}
\end{align*}
Recalling from Lemma \ref{lem:inBox} that the probability goes to $0$ and from \cite{sawyer1979limit} that $\E[|X|^4]/\E[|X|]^4 \rightarrow 4!$ as $L\rightarrow \infty$, it follows that the second term divided by $\E[|X|]^2$ approaches $0$. Therefore,
$$
0\leq \lim_{L\rightarrow \infty} \frac{\V[S^2]}{\E[S]^2} = \lim_{L\rightarrow \infty} \frac{\V[S^2]}{\E[S]^2} - 1\leq \lim_{L\rightarrow \infty} \frac{\E[|X_i||X_j|]}{\E[|X|]^2}-1 =0
$$
yielding our desired result.

\printbibliography

\end{document}